\theoremstyle{plain}
\newtheorem{proposition}{Proposition}[section]
\newtheorem{assumption}{Assumption}[section]
\newtheorem{lemma}{Lemma}[section]
\theoremstyle{remark}
\newtheorem*{remark}{Remark}
\numberwithin{equation}{section}
\providecommand{\haz}{h}
\providecommand{\barF}{\overline{F}}
\newcommand{\E}{\mathbb{E}}
\newcommand{\wT}{\omega_T}
\newcommand{\wb}{\omega_b}
\newcommand{\proj}[3]{\bigl[#1\bigr]_{#2}^{\ #3}}
\title{Bailouts by Representation: A Minimal TLC Theory with Weighted Consent}
\author{Xinli Guo\thanks{Email: \texttt{xinlig@mun.ca}. Department of Economics, Memorial University of Newfoundland.}}
\date{\today}
\begin{document}
\maketitle

\begin{abstract}
We develop a purely theoretical mechanism in which provincial bailouts are disciplined by two levers derived from Weighted Representative Democracy (WRD): a political shadow cost of public funds and a weighted-consent cap on any transfer. Bailouts are considered only when an externality threshold is met. With quadratic implementation costs, the static problem yields a simple ''threshold--linear--cap'' (TLC) policy: zero below a lower activation point, linear in the middle, and flat at an upper cap. We characterize a knife-edge for a complete no-bailout regime, either because the consent cap blocks all proposals or because the political cost of funds exceeds the maximum marginal benefit of relief. Comparative statics are transparent: increasing the political cost shifts activation upward; tightening the consent cap lowers the upper plateau; joint institutional changes move the two kinks in predictable ways. The WRD mechanism is observationally equivalent to TLC rules obtained from screening models, yet attributes the same geometry to representation design rather than private information. For implementation and audit, we propose a constrained two-knot spline that recovers three interpretable metrics from data---activation threshold, interior slope, and the cap level---which support routine ex-post compliance checks.
\end{abstract}

\noindent\textbf{Keywords}: intergovernmental transfers; mechanism design; soft budget constraints; representative choice; weighted voting; bailout policy.\\
\noindent\textbf{JEL}: H77, D72, H74, C72.

\newpage
\section{Introduction}
Soft-budget problems arise when higher-tier governments cannot credibly commit not to bail out lower tiers. Classic approaches invoke commitment devices or market discipline; democratic accountability is usually treated as a constraint rather than a design lever. This paper develops a complementary, purely theoretical mechanism grounded in \emph{Weighted Representative Democracy} (WRD): citizens freely assign themselves to representatives; legislative weights are proportional to endorsements \citep{PivatoSoh2020JME}. We show that two WRD-induced levers---(i) a \emph{political/fiscal weight} on public funds, $\wT(w)$, and (ii) a \emph{weighted-consent cap} $\bar b(w)$ on any bailout---jointly implement a simple, auditable \emph{threshold--linear--cap} (TLC) bailout rule under minimal regularity.

\paragraph{A representation-design route to discipline.}
The core idea is to discipline discretion not by informational instruments or external markets, but by the \emph{representation technology} itself. Endorsement-weighted voting induces a maximal implementable transfer (a cap), while the composition of representation determines the shadow value of public funds. With a minimal admissibility test ($\theta\ge T$) and a quadratic implementation cost, the provincial optimization problem has a unique solution with two regime switches and a linear interior segment. In the linear-benefit specialization $B(b,\theta)=\wb\theta b$, the closed-form policy is
\[
 b^*(\theta;w)=\proj{\tfrac{\wb\theta-\wT(w)}{c}}{0}{\bar b(w)},\qquad
 \underline\theta=\max\{T,\wT/\wb\},\qquad
 \overline\theta=\tfrac{\wT+c\,\bar b}{\wb},
\]
where $\bar\theta$ denotes the upper bound of the support of $\theta$. A complete no-bailout regime arises iff either $\bar b(w)=0$ or $\wT(w)\ge \wb\,\bar\theta$.

\paragraph{Relation to a companion theoretical strand.}
In a companion paper, \citet{Guo2025SBCTransfer_arXiv} develops a Stackelberg-style screening model for intergovernmental transfers and shows that optimal discretionary policy also exhibits a TLC signature together with a knife-edge for no-bailout. The present paper provides a representation-design counterpart: we derive the same observable primitives---two cutoffs $(\underline\theta,\overline\theta)$ and an interior slope $\wb/c$---from weighted consent and political-cost weights, without relying on private information or dynamic credibility. This establishes an \emph{observational equivalence} class: screening-based and representation-based mechanisms can generate the same TLC geometry, yet they map institutional levers to $(\underline\theta,\overline\theta,\wb/c)$ differently.

\paragraph{Why a pure-theory treatment matters.}
First, the WRD route separates informational issues from institutional discipline: the two cutoffs and the interior slope arise from monotone voting and quadratic costs alone. Second, the mapping from quotas and weights to $(\underline\theta,\overline\theta)$ is transparent: $\underline\theta$ shifts with $\wT/\wb$, and $\overline\theta$ tracks $(\wT+c\bar b)/\wb$; hence ``what to publish'' and ``what to audit'' are immediate. Third, the mechanism remains portable: equity floors or group-sensitive costs can be layered while preserving TLC geometry.

\paragraph{Contributions.}
(i) A minimal WRD-based mechanism delivering closed-form TLC discipline with sharp, notation-level comparative statics;  
(ii) a knife-edge characterization for complete no-bailout that cleanly separates cap-driven and cost-driven regimes;  
(iii) an equivalence bridge to the screening-based theory in \citet{Guo2025SBCTransfer_arXiv}, clarifying what is identified by the two cutoffs and the interior slope;  
(iv) an estimation/audit recipe (two-knot constrained spline) that turns the TLC geometry into ex-post metrics.

\paragraph{Roadmap.}
Section~\ref{sec:model} presents the model and the WRD microfoundations; we derive the TLC rule and comparative statics, then discuss identification and design implications. Proofs are in the appendix.

\section{Related Literature}
\paragraph{Soft budget constraints and intergovernmental bailouts.}
The classical literature explains bailouts as dynamic inconsistency or commitment failures and studies remedies via market discipline and hard budget rules \citep{DewatripontMaskin1995,QianRoland1998,KornaiMaskinRoland2003}. A companion theoretical strand develops transfer mechanisms that implement piecewise policies under information frictions. In particular, \citet{Guo2025SBCTransfer_arXiv} derives a TLC schedule and a no-bailout knife-edge from a Stackelberg screening model. Our paper complements this theory by showing that the same TLC geometry can be obtained from representation design---through a political-cost weight $\wT$ and a consent cap $\bar b$ induced by weighted voting---under minimal regularity and without private information.

\paragraph{Voting as estimation and representative design.}
The epistemic program interprets collective choice as truth-tracking under noisy signals \citep{ListGoodin2001,Estlund2008,Landemore2013}. Within social choice, \citet{Pivato2013SCW} treats voting rules as statistical estimators; \citet{Pivato2014MSS,Pivato2016SCW} provide optimality results for range/scoring rules; and \citet{NunezPivato2019GEB} establish truth-revealing stochastic voting for large populations. \citet{PivatoSoh2020JME} formalize \emph{Weighted Representative Democracy} with self-assignment of representation. We use WRD as a microfoundation for two reduced-form levers: a political/fiscal shadow cost $\wT(w)$ and a consent cap $\bar b(w)$. A monotone legislative aggregator delivers a cap representation, preserving sharp comparative statics in quotas and weights and leading directly to an auditable TLC rule.

\paragraph{Design, identification, and audit.}
Methodologically, our identification is geometry-driven. Under linear benefits, the interior slope identifies $\wb/c$; together with the two cutoffs, this suffices to back out $(\wT,\bar b)$ up to policy-normalization. This mirrors the empirical diagnostics suggested in \citet{Guo2025SBCTransfer_arXiv}, but the institutional levers differ: screening parameters versus representation weights/quotas. The contrast clarifies which observed shifts should be attributed to information/design versus representation/weights, and suggests simple ``rule cards'' that publish $(\underline\theta,\overline\theta,\wb/c)$ and record weighted consent for ex-post audits.

\paragraph{Positioning.}
Relative to the SBC tradition, we treat democratic institutions not as exogenous constraints but as \emph{design variables}. Relative to \citet{Guo2025SBCTransfer_arXiv}, we replace private-information instruments with publicly observable consent and political-cost weights, obtaining the same TLC signatures under different levers. The payoff is implementability and transparency: the mechanism specifies exactly which institutional knobs move which segment of the TLC schedule.

\section{Model}\label{sec:model}
We study a provincial--municipal environment in which bailouts are disciplined by two levers induced by \emph{Weighted Representative Democracy} (WRD): a political/fiscal cost weight on public funds and a weighted-consent cap on the size of any bailout.

\subsection{Agents, Uncertainty, and Technology}
There is a province $P$ and a municipality $i$ facing a fiscal shock $\theta\in\Theta=[0,\bar\theta]$ with cumulative distribution $F$ and density $f$; denote the survivor $\barF(\theta)=1-F(\theta)$ and hazard $\haz(\theta)=f/\barF(\theta)$. The realization of $\theta$ is publicly observed.

A bailout $b\in[0,\infty)$ mitigates losses and (potentially) contains province-wide externalities. Let the reduced-form \emph{benefit} from the municipality's perspective be $B(b,\theta)$ and the provincial \emph{implementation cost} be $C(b;w)$, where $w$ summarizes the WRD weight profile (defined below). Throughout the model section we maintain the following primitives.

\begin{description}
  \item[(A1) Benefit.] $B: \mathbb{R}_+\times\Theta\to\mathbb{R}_+$ is increasing and concave in $b$, and exhibits single crossing in $(b,\theta)$: $B_b(b,\theta)$ is increasing in $\theta$. A canonical special case used for closed forms is the \emph{linear-in-$b$} specification $B(b,\theta)=\wb\,\theta\,b$ with parameter $\wb>0$.
  \item[(A2) Cost.] $C(b;w)=\wT(w)\,b+\tfrac{c}{2}b^2$, with $c>0$ and $\wT(w)\ge0$. The term $\wT(w)$ captures the political/fiscal shadow value of one public dollar under WRD.
  \item[(A3) Admissibility.] A bailout is \emph{admissible} only if the spillover mitigated by $b$ exceeds an externality threshold $T\in[0,\bar\theta]$. We encode this minimally as: if $\theta<T$ then $b=0$; otherwise $b\ge0$ is feasible.
\end{description}
\begin{assumption}[Regularity]\label{ass:reg}
$B(b,\theta)$ is continuous in $(b,\theta)$ and continuously differentiable in $b$ with $B_{bb}(b,\theta)\le 0$; $B_b(b,\theta)$ is (weakly) increasing in $\theta$ and bounded at $b=0$. $C(b;w)$ is continuously differentiable and strictly convex in $b$ with $C_b(0;w)=\wT(w)$ and $C_{bb}(b;w)=c>0$.
\end{assumption}

\begin{remark}[Continuity and regime switches]
Under \cref{ass:reg}, $b^*(\theta;w)$ is right-continuous in $\theta$ with at most two \emph{regime switches} (from zero to interior, and from interior to cap) at $\underline\theta(w)$ and $\overline\theta(w)$.
\end{remark}

\subsection{Representative Choice and Weighted Consent (WRD)}
Citizens freely assign themselves to representatives $r\in\mathcal{R}$. Let $w_r\in[0,1]$ be the fraction of citizens endorsing $r$, with $\sum_r w_r=1$. Collect weights as $w=(w_r)_r$. Representatives vote on a proposed bailout level $b$ for the municipality hit by $\theta$. Let $y_r(b,\theta)\in\{0,1\}$ be $r$'s vote and define the weighted support aggregator 
\[
Y(b,\theta;w)=\sum_{r\in\mathcal{R}} w_r\,y_r(b,\theta).
\]
A proposal passes the legislature if and only if $Y(b,\theta;w)\ge\tau$ for some quota $\tau\in(0,1]$.

We impose the standard \emph{single-crossing} and \emph{monotone voting} property: for each $r$, $y_r(b,\theta)$ is weakly decreasing in $b$ and weakly increasing in $\theta$.

\begin{assumption}[Monotone and right-continuous aggregator]\label{ass:monorc}
For any fixed $(\theta,w)$, the weighted aggregator $Y(\cdot,\theta;w)$ is right-continuous in $b$. In addition, for each representative $r$, $y_r(b,\theta)$ is weakly decreasing in $b$ and weakly increasing in $\theta$.
\end{assumption}

Under these conditions, there exists an equivalent \emph{cap representation}: for each $(\theta,w)$ there is a maximal implementable $\bar b(\theta;w)\in[0,\infty]$ such that
\[
Y(b,\theta;w)\ge\tau \quad\Longleftrightarrow\quad b\le \bar b(\theta;w).
\]
To keep the model analytically transparent while preserving the comparative statics, we work with a conservative, \emph{$\theta$-uniform cap} $\bar b(w)\le \inf_{\theta\ge T} \bar b(\theta;w)$, and we take this uniform cap as a binding \emph{institutional constraint} on proposals.

\begin{assumption}[Co-monotone policy bundle]\label{ass:bundle}
Institutional changes that increase $\wT(w)$ (greater emphasis on net contributors/reputation) weakly decrease $\bar b(w)$.
\end{assumption}

\paragraph{Microfoundation of $\wT(w)$.}
Let the political marginal cost per public dollar equal $\lambda_0+\lambda_1\,\phi(w)$ with $\lambda_0,\lambda_1\ge0$ and $\phi$ increasing in the voter weight placed on net contributors; then identify $\wT(w)=\lambda_0+\lambda_1\,\phi(w)$.

\paragraph{Game form and equilibrium.}
Given $(\theta,w)$, the province proposes $b\in\mathbb{R}_+$ (subject to $b\le \bar b(w)$); representatives vote $y_r(b,\theta)\in\{0,1\}$; the outcome implements $b$ iff $Y(b,\theta;w)\ge\tau$ (ties $Y=\tau$ pass).
A strategy profile together with the outcome rule defines a subgame following $\theta$.

\begin{assumption}[Cap representation]\label{def:cap}
Let $\bar b(\theta;w)=\sup\{b\ge 0: Y(b,\theta;w)\ge \tau\}$ (with the convention $\sup\emptyset=0$).
\end{assumption}

\begin{lemma}[Existence of cap]\label{lem:cap}
Under \cref{ass:monorc}, $\bar b(\theta;w)$ is well-defined and $b$ passes iff $b\le \bar b(\theta;w)$.
\end{lemma}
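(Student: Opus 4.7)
The plan is to prove two things: (a) the supremum defining $\bar b(\theta;w)$ is well-posed, and (b) the pass condition $Y(b,\theta;w)\ge\tau$ is equivalent to $b\le\bar b(\theta;w)$. The two structural ingredients are monotonicity of $Y(\cdot,\theta;w)$ in $b$, inherited from the $y_r$'s via Assumption~\ref{ass:monorc}, which gives the cap shape of the pass set, and right-continuity of $Y(\cdot,\theta;w)$ together with the $Y=\tau$ tie-passing rule in the game form, which closes the pass set at its right endpoint.

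First I would establish well-definedness: the pass set $S(\theta,w):=\{b\ge 0:\ Y(b,\theta;w)\ge\tau\}$ lies in $[0,\infty)$ and is bounded below by $0$, so $\bar b(\theta;w)=\sup S(\theta,w)$ exists in $[0,\infty]$, and the convention $\sup\emptyset=0$ handles the degenerate case. Next I would show that $Y(\cdot,\theta;w)=\sum_r w_r\,y_r(\cdot,\theta)$ is weakly decreasing in $b$: each $y_r(\cdot,\theta)$ is weakly decreasing by Assumption~\ref{ass:monorc}, and a non-negative weighted sum of weakly decreasing maps is weakly decreasing. Consequently $S$ is downward closed in $b$: $b\in S$ and $0\le b'\le b$ imply $Y(b',\theta;w)\ge Y(b,\theta;w)\ge\tau$, so $b'\in S$. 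Combined with the definition of the supremum, this already delivers the two easy directions: $b<\bar b(\theta;w)\Rightarrow b\in S$ (pick any $b''\in S$ with $b<b''\le\bar b(\theta;w)$ and invoke monotonicity), and $b>\bar b(\theta;w)\Rightarrow b\notin S$ (immediate from the definition of $\sup$).

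The remaining and main step is boundary inclusion $\bar b(\theta;w)\in S$ when $\bar b(\theta;w)<\infty$. I would argue as follows: for $b>\bar b(\theta;w)$, $b\notin S$, so $Y(b,\theta;w)<\tau$; right-continuity then forces $Y(\bar b(\theta;w),\theta;w)=\lim_{b\downarrow\bar b(\theta;w)}Y(b,\theta;w)\le\tau$; meanwhile, taking any sequence $b_n\uparrow\bar b(\theta;w)$ in $S$, monotonicity gives $Y(\bar b(\theta;w),\theta;w)\le Y(b_n,\theta;w)$ with the $b_n$-side already at or above $\tau$. The $Y=\tau$ tie-pass rule then absorbs the knife-edge equality case into $S$, closing the argument. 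The main obstacle is precisely this boundary: for a weakly decreasing right-continuous aggregator, $\{Y\ge\tau\}$ is generically half-open on the right, so one must lean on the explicit tie-pass convention (or, equivalently, upper semicontinuity of $Y$ at $\bar b$) to include the endpoint in the cap representation.
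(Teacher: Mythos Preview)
Your overall plan mirrors the paper's: use monotonicity of the $y_r$ to make $Y(\cdot,\theta;w)$ weakly decreasing, so the pass set $S(\theta,w)$ is a downward-closed initial segment, and then handle the endpoint. The paper's proof is terse and simply asserts that the upper contour set is $[0,\bar b]$; you go further and correctly flag the endpoint $\bar b(\theta;w)$ as the only nontrivial case.

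However, your boundary argument has a genuine gap. From $Y(b,\theta;w)<\tau$ for $b>\bar b$ and right-continuity you obtain $Y(\bar b,\theta;w)\le\tau$; from $b_n\uparrow\bar b$ in $S$ and monotonicity you obtain $Y(\bar b,\theta;w)\le Y(b_n,\theta;w)$. Neither inequality bounds $Y(\bar b,\theta;w)$ from \emph{below}, so you have not established $Y(\bar b,\theta;w)\ge\tau$. The tie-pass convention is already baked into $S=\{Y\ge\tau\}$ and only rescues the case $Y(\bar b)=\tau$; it does nothing when $Y(\bar b)<\tau$. A concrete obstruction: take $Y(b)=\mathbf{1}\{b<1\}$ with $\tau=\tfrac12$. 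This aggregator is weakly decreasing and right-continuous, yet $S=[0,1)$, $\bar b=1$, and $Y(1)=0<\tau$, so $b=\bar b$ fails to pass and the ``iff'' is false. For a weakly decreasing map, closedness of $\{Y\ge\tau\}$ is equivalent to \emph{upper} semicontinuity, which here means \emph{left}-continuity; right-continuity points the wrong way. Your parenthetical that the tie-pass rule is ``equivalently, upper semicontinuity of $Y$ at $\bar b$'' conflates two different things: the former is a convention already in the definition of $S$, the latter is the extra regularity actually needed. The fix is either to assume left-continuity of $Y(\cdot,\theta;w)$ (or, at the primitive level, that each $y_r$ has the form $\mathbf{1}\{b\le t_r(\theta)\}$), or to weaken the conclusion to ``$b<\bar b$ passes and $b>\bar b$ fails.'' The paper's own proof elides exactly this point, so the defect lies in the assumption/statement pairing rather than in your strategy.
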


\begin{proposition}[Equilibrium--planner equivalence]\label{prop:equiv}
Fix $\theta$. Under \cref{ass:reg}, and taking the uniform cap $[0,\bar b(w)]$ as the proposal constraint, in the subgame the unique subgame-perfect outcome is the solution to
$\max_{b\in[0,\bar b(w)]} U(b,\theta;w)$ with $b=0$ if $\theta<T$.
\end{proposition}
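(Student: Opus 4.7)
My plan is to invoke backward induction through the two stages of the subgame (propose, then vote), using \cref{lem:cap} to collapse the voting stage into a feasibility check and then leveraging strict concavity to pin down a unique proposal.

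First, I would dispose of the admissibility case. If $\theta<T$, then by (A3) the only feasible bailout is $b=0$, so the constrained program degenerates to a singleton and the claim is immediate. Henceforth assume $\theta\ge T$.

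Second, I would reduce the vote stage to a feasibility indicator. By the uniform-cap convention $\bar b(w)\le \inf_{\theta\ge T}\bar b(\theta;w)$, every admissible proposal $b\in[0,\bar b(w)]$ satisfies $b\le \bar b(\theta;w)$; by \cref{lem:cap}, such a proposal passes. Conversely, any $b>\bar b(w)$ is excluded by the institutional proposal constraint. Hence in any SPE of the vote subgame the outcome following a feasible proposal $b$ is implementation of $b$, and the province's problem reduces to $\max_{b\in[0,\bar b(w)]} U(b,\theta;w)$ with $U=B-C$.

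Third, I would establish existence and uniqueness of the maximizer. By (A1)--(A2) and \cref{ass:reg}, $U(\cdot,\theta;w)$ is continuous and strictly concave on the nonempty compact interval $[0,\bar b(w)]$ (concave $B$ minus strictly convex $C$). Weierstrass together with strict concavity deliver a unique $b^\ast(\theta;w)$, which by the reduction above coincides with the unique SPE outcome.

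The main obstacle I anticipate is the standard multiplicity of Nash equilibria in simultaneous voting subgames (e.g., unanimous rejection is typically Nash). I would handle this by appeal to \cref{ass:monorc}: monotone behavioral voting fixes the action $y_r(b,\theta)$ representative-by-representative, so the passage characterization $Y(b,\theta;w)\ge\tau\Leftrightarrow b\le \bar b(\theta;w)$ used in \cref{lem:cap} holds in every SPE. Equivalently, a weak-dominance refinement at the vote stage suffices to eliminate coordination failures. Either route secures uniqueness at the outcome level, which is all the proposition asserts.
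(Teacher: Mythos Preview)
Your proof is correct and follows essentially the same route as the paper: invoke \cref{lem:cap} and the uniform-cap convention to collapse the voting stage to a feasibility constraint, then apply strict concavity (equivalently, \cref{lem:wellposed}) and backward induction to obtain a unique proposal. Your explicit discussion of the voting-stage multiplicity and its resolution via the monotone-voting assumption is a useful clarification of what the paper compresses into the phrase ``representatives' votes are automata.''
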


\subsection{A Minimal Microfoundation for the Consent Cap}
Assume two blocs of representatives: taxpayers $T$ with weight $w_T$ and beneficiaries $B$ with weight $w_B=1-w_T$. Each $B$-representative has a private support threshold $x$ and votes $y=1$ iff $b\le x\,\theta$. Let $x\sim H$ on $\mathbb{R}_+$ i.i.d.\ across $B$-reps. $T$-reps vote $0$ (they strictly dislike larger $b$). Under a quota $\tau\in(0,1]$, weighted support is
\[
Y(b,\theta;w)= w_B\,[1-H(b/\theta)].
\]
A proposal passes iff $Y\ge \tau$, i.e.
\[
b\le \theta \cdot H^{-1}\!\left(1-\frac{\tau}{w_B}\right).
\]
We assume $\tau\le w_B$ so that $1-\tau/w_B\in[0,1]$. Let $H^{-1}(u)=\inf\{x\in\mathbb{R}_+: H(x)\ge u\}$ be the generalized inverse. If $\tau>w_B$, no proposal can pass and we set $\bar b(w)=0$.

To implement a $\theta$-uniform cap preserving feasibility for all $\theta\ge T$, we take
\[
\bar b(w)= T \cdot H^{-1}\!\left(1-\frac{\tau}{w_B}\right),\qquad 
\frac{\partial \bar b}{\partial \tau}<0,\quad \frac{\partial \bar b}{\partial w_B}>0.
\]
This delivers the cap monotonicity in~\cref{ass:bundle}. The political shadow cost of funds follows 
$\wT(w)=\lambda_0+\lambda_1\,\phi(w)$ with $\phi$ increasing in $w_T$.

\subsection{Timing}
\begin{enumerate}
  \item WRD weights $w$ are given (institutional environment). The quota $\tau$ and admissibility threshold $T$ are fixed and commonly known. The legislature imposes a uniform cap $\bar b(w)$ on proposals.
  \item The shock $\theta$ realizes and is observed by all.
  \item The province proposes a bailout level $b\le \bar b(w)$; the WRD legislature votes.
  \item If the proposal passes and $\theta\ge T$, $b$ is implemented; otherwise $b=0$.
\end{enumerate}

\subsection{The Provincial Problem}
Given $(\theta,w)$, the province chooses $b$ to maximize net payoff subject to admissibility and the consent cap:
\begin{equation}
\label{eq:planner}
\max_{b\in[0,\bar b(w)]} \; U(b,\theta;w)=B(b,\theta)-C(b;w) \quad \text{subject to} \quad b=0 \; \text{if} \; \theta<T.
\end{equation}
We first analyze the general concave-benefit case and then specialize to the linear-in-$b$ case for closed-form expressions.

\begin{lemma}[Well-posedness and uniqueness of the planner's problem]\label{lem:wellposed}
Under \cref{ass:reg}, for each $(\theta,w)$ with $\theta\ge T$, the problem
\[
\max_{b\in[0,\bar b(w)]}\ U(b,\theta;w)=B(b,\theta)-C(b;w)
\]
admits a unique maximizer $b^*(\theta;w)$. If $\theta<T$, the admissibility rule imposes $b^*(\theta;w)=0$.
\end{lemma}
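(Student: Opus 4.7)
The plan is to argue existence by a standard Weierstrass extreme value argument and uniqueness by strict concavity, after disposing of the inadmissible regime trivially. For $\theta < T$, Assumption (A3) directly imposes $b^*(\theta;w) = 0$, which is the unique admissible choice; the substance therefore lies in the $\theta \geq T$ case.

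For $\theta \geq T$, I would first note that the feasible set $[0, \bar b(w)]$ is nonempty (it contains $b=0$) and compact as a closed bounded interval, with the degenerate subcase $\bar b(w)=0$ collapsing to the singleton $\{0\}$ and yielding $b^*=0$. Under \cref{ass:reg}, $B(\cdot,\theta)$ and $C(\cdot;w)$ are each continuously differentiable in $b$, so $U=B-C$ is continuous in $b$; the Weierstrass extreme value theorem then delivers at least one maximizer.

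For uniqueness, I would exploit strict concavity of $U(\cdot,\theta;w)$. By \cref{ass:reg}, $B_{bb}(b,\theta) \leq 0$ (so $B$ is concave in $b$) and $C_{bb}(b;w) = c > 0$ (so $C$ is strictly convex in $b$); hence the second derivative of $U$ in $b$ is bounded above by $-c < 0$, and $U(\cdot,\theta;w)$ is strictly concave on the convex feasible set. A strictly concave function on a nonempty convex set has at most one maximizer, which together with existence gives exactly one.

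There is no real obstacle here; the argument is essentially a textbook application of Weierstrass plus strict concavity. The only delicate bookkeeping is (i) the degenerate subcase $\bar b(w)=0$, handled as above, and (ii) if one wished to allow $\bar b(w) = \infty$, one would supplement the argument with coercivity: \cref{ass:reg} bounds $B_b(0,\theta)$, concavity of $B$ then gives $B(b,\theta) \leq B(0,\theta) + B_b(0,\theta)\,b$, and the quadratic term $\tfrac{c}{2}b^2$ in $C$ forces $U(b) \to -\infty$ as $b \to \infty$, permitting restriction to a compact subinterval before invoking Weierstrass.
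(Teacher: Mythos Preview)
Your proof is correct and follows essentially the same approach as the paper's: existence from compactness (or coercivity when $\bar b(w)=\infty$) and uniqueness from strict concavity of $U(\cdot,\theta;w)$, with the $\theta<T$ case handled by admissibility. Your version is simply more explicit about Weierstrass, the second-derivative bound $U_{bb}\le -c<0$, and the degenerate/unbounded edge cases, but there is no substantive difference in strategy.
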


\subsection{Solution: General Concave Benefits}
Assume (A1)--(A3). For $\theta<T$, $b^{\*}(\theta;w)=0$. For $\theta\ge T$, $U(\cdot,\theta;w)$ is strictly concave in $b$, so the unique solution is the projection of the unconstrained maximizer onto the interval $[0,\bar b(w)]$.

Let $G(b,\theta)=B_b(b,\theta)$ denote the marginal benefit. The first-order condition for an interior solution is
\begin{equation}
\label{eq:FOCgeneral}
G(b,\theta)=\wT(w)+c b.
\end{equation}
By concavity and single crossing, the FOC pins down a unique interior $b^{\mathrm{int}}(\theta;w)$, increasing in $\theta$. Define the threshold where the interior solution first becomes positive as
\[
\underline\theta( w)=\inf\{\theta\ge T: G(0,\theta) \ge \wT(w)\}.
\]
Similarly, define the upper cutoff where the interior optimum would hit the cap:
\[
\overline\theta(w)=\inf\{\theta\ge \underline\theta(w): G(\bar b(w),\theta) \ge \wT(w)+c\,\bar b(w)\}.
\]
Then the optimal policy takes the \emph{threshold--monotone--cap} form
\begin{equation}
\label{eq:policygeneral}
 b^{\*}(\theta;w)=
 \begin{cases}
  0, & \theta<\underline\theta(w),\\
  b^{\mathrm{int}}(\theta;w)\in(0,\bar b(w)), & \underline\theta(w)\le \theta<\overline\theta(w),\\
  \bar b(w), & \theta\ge \overline\theta(w).
 \end{cases}
\end{equation}
Monotone comparative statics follow from $G_b<0$ and the supermodularity of $U$ in $(b,\theta)$.

\subsection{Specialization: The Minimal Quadratic Rule}
Under the linear-in-$b$ benefit $B(b,\theta)=\wb\,\theta\,b$, we have $G(b,\theta)=\wb\,\theta$, so \eqref{eq:FOCgeneral} yields the interior candidate $b^{\mathrm{int}}(\theta;w)=(\wb\,\theta-\wT(w))/c$. Projecting onto $[0,\bar b(w)]$ gives the closed-form policy
\begin{equation}
\label{eq:policyclosed}
 b^{\*}(\theta;w)=\proj{\tfrac{\wb\,\theta-\wT(w)}{c}}{0}{\bar b(w)}=
 \begin{cases}
 0, & \theta<\underline\theta(w)=\max\{T,\wT(w)/\wb\},\\
 \dfrac{\wb\,\theta-\wT(w)}{c}, & \underline\theta(w)\le \theta\le \overline\theta(w)=\dfrac{\wT(w)+c\,\bar b(w)}{\wb},\\
 \bar b(w), & \theta>\overline\theta(w).
 \end{cases}
\end{equation}
This is the \emph{threshold--linear--cap} (TLC) rule used in propositions below.

\subsection{Core Results}
\begin{proposition}[General concave case: threshold--monotone--cap]\label{prop:TMC}
Under \textup{(A1)--(A3)} and $\theta\ge T$, the provincial problem admits a unique solution of the form
\[
b^{*}(\theta;w)=
\begin{cases}
0, & \theta<\underline\theta(w),\\
b^{\mathrm{int}}(\theta;w)\in(0,\bar b(w)), & \underline\theta(w)\le \theta<\overline\theta(w),\\
\bar b(w), & \theta\ge \overline\theta(w),
\end{cases}
\]
where $\underline\theta(w)=\inf\{\theta\ge T:G(0,\theta)\ge \wT(w)\}$ and $\overline\theta(w)=\inf\{\theta\ge \underline\theta(w): G(\bar b(w),\theta)\ge \wT(w)+c\,\bar b(w)\}$. Moreover $b^{\mathrm{int}}$ is strictly increasing in $\theta$.
\end{proposition}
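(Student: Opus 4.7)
The plan is to reduce the proposition to a pointwise optimization exercise in $b$ and exploit the monotone structure of the marginal benefit. Existence and uniqueness of $b^{*}(\theta;w)$ are already furnished by \cref{lem:wellposed}, so the task is to identify where this unique maximizer sits inside the feasible interval $[0,\bar b(w)]$ as $\theta$ varies. I will define the \emph{unconstrained} FOC candidate $b^{\mathrm{u}}(\theta;w)$ as the unique solution (when it exists and is non-negative) of $G(b,\theta)=\wT(w)+cb$, and then observe that since $U$ is strictly concave in $b$, the constrained maximizer is simply the projection of $b^{\mathrm{u}}$ onto $[0,\bar b(w)]$. This converts the claim into: $b^{\mathrm{u}}(\theta;w)\le 0 \iff \theta<\underline\theta(w)$ and $b^{\mathrm{u}}(\theta;w)\ge \bar b(w)\iff \theta\ge\overline\theta(w)$, with the interior regime in between.

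Next I will carry out the three cases. For $\theta<\underline\theta(w)$, the defining inequality $G(0,\theta)<\wT(w)$ (strict, by the infimum definition together with continuity of $G$ in $\theta$) gives $U_b(0;\theta,w)<0$, so by strict concavity $b=0$ is optimal. For $\theta\ge\overline\theta(w)$, the defining inequality $G(\bar b(w),\theta)\ge \wT(w)+c\bar b(w)$ yields $U_b(\bar b(w);\theta,w)\ge 0$, so again by concavity the maximizer is driven to the upper endpoint. For $\underline\theta(w)\le\theta<\overline\theta(w)$, both boundary derivatives have the appropriate signs and strict concavity delivers a unique interior root $b^{\mathrm{int}}(\theta;w)\in(0,\bar b(w))$ of \cref{eq:FOCgeneral}. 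Continuity of the thresholds and right-continuity of $b^{*}(\cdot;w)$ at the kinks follow from continuity of $G$ and the maximum theorem applied to the strictly concave objective.

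For the final claim that $b^{\mathrm{int}}$ is strictly increasing in $\theta$, I will use two equivalent routes and pick the cleaner one. The direct route is the implicit function theorem on $\Phi(b,\theta)=G(b,\theta)-\wT(w)-cb=0$: under \cref{ass:reg}, $\Phi_b=G_b-c<0$ (since $B_{bb}\le 0$ and $c>0$), and $\Phi_\theta=G_\theta\ge 0$ with strict positivity under the strict single-crossing form of (A1), so $db^{\mathrm{int}}/d\theta=-\Phi_\theta/\Phi_b>0$. The abstract alternative is to invoke supermodularity of $U$ in $(b,\theta)$ (equivalent to $B_{b\theta}\ge 0$) together with strict concavity and apply Topkis-style monotone comparative statics; strict monotonicity then follows from the strict single-crossing of $G$ in $\theta$.

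The main obstacle is the boundary and measurability bookkeeping at the two kinks: the infimum definitions of $\underline\theta(w)$ and $\overline\theta(w)$ must be reconciled with the \emph{strict} vs.\ \emph{weak} inequalities that determine which regime $\theta$ belongs to. I expect to devote a short argument to showing that the infimum is attained (by continuity of $G$ in $\theta$ on the compact support), so that the three-way split in \eqref{eq:policygeneral} is exhaustive and non-overlapping, and that $b^{*}(\cdot;w)$ is continuous at both kinks rather than merely right-continuous. Everything else is a routine application of strict concavity.
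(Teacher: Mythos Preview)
Your proposal is correct and follows essentially the same argument as the paper: the paper uses KKT language (multipliers $\mu,\nu$ with complementarity) where you use the equivalent projection/sign-of-$U_b$-at-the-endpoints reasoning, and both of you establish monotonicity of $b^{\mathrm{int}}$ by implicitly differentiating $G(b,\theta)=\wT(w)+cb$. Your treatment is in fact slightly more careful than the paper's on two points: you flag that strict (as opposed to weak) monotonicity of $b^{\mathrm{int}}$ requires the strict single-crossing reading of (A1), and you address the boundary bookkeeping at the two infimum-defined kinks, whereas the paper simply asserts these are the transition points.
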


\begin{proposition}[Linear case: closed-form TLC]\label{prop:TLC}
If $B(b,\theta)=\wb\,\theta\,b$, then
\[
b^{*}(\theta;w)=\proj{\tfrac{\wb\,\theta-\wT(w)}{c}}{0}{\bar b(w)}=
\begin{cases}
0, & \theta<\underline\theta(w)=\max\{T,\wT(w)/\wb\},\\
(\wb\,\theta-\wT(w))/c, & \underline\theta(w)\le \theta\le \overline\theta(w)=\tfrac{\wT(w)+c\,\bar b(w)}{\wb},\\
\bar b(w), & \theta>\overline\theta(w).
\end{cases}
\]
\end{proposition}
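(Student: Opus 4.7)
The plan is to specialize Proposition~\ref{prop:TMC} to the linear-in-$b$ benefit $B(b,\theta)=\wb\theta b$; since the general concave case has already been established, most of the work is algebraic verification of the two cutoffs and the interior formula. First I would check that the linear benefit still satisfies Assumption~\ref{ass:reg} and (A1)--(A3): $B_b(b,\theta)=\wb\theta$ is bounded at $b=0$ and weakly increasing (in fact affine) in $\theta$, while $B_{bb}\equiv 0\le 0$ gives weak concavity in $b$. Combined with $C_{bb}(b;w)=c>0$, the objective $U(\cdot,\theta;w)$ is strictly concave in $b$, so Lemma~\ref{lem:wellposed} yields existence and uniqueness of $b^{*}(\theta;w)$ on $[0,\bar b(w)]$, and Proposition~\ref{prop:TMC} applies to give the threshold--monotone--cap form.

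Next I would compute the interior candidate. The FOC~\eqref{eq:FOCgeneral} becomes $\wb\theta=\wT(w)+c b$, hence $b^{\mathrm{int}}(\theta;w)=(\wb\theta-\wT(w))/c$, an affine function of $\theta$ with strictly positive slope $\wb/c$. Plugging $G(b,\theta)=\wb\theta$ into the cutoff definitions from Proposition~\ref{prop:TMC}: the lower kink satisfies $\wb\underline\theta\ge\wT(w)$ subject to admissibility $\theta\ge T$, which identifies $\underline\theta(w)=\max\{T,\wT(w)/\wb\}$; the upper kink solves $\wb\overline\theta\ge\wT(w)+c\bar b(w)$, giving $\overline\theta(w)=(\wT(w)+c\bar b(w))/\wb$. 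Strict monotonicity of $G$ in $\theta$ makes both infima attained, so the definitions collapse to the stated closed forms.

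Projecting $b^{\mathrm{int}}(\theta;w)$ onto $[0,\bar b(w)]$ then produces the three pieces, matching the compact projection notation $\proj{(\wb\theta-\wT(w))/c}{0}{\bar b(w)}$ by construction. The only real bookkeeping, which I expect to be the main (mild) obstacle, is handling degenerate configurations cleanly: if $\wT(w)/\wb<T$ the admissibility threshold $T$ binds at the lower kink rather than the FOC; if $\bar b(w)=0$ the interior and cap segments collapse and the policy is identically zero on $\Theta$; and if $\overline\theta(w)>\bar\theta$ the cap region is empty within the support. Each case is absorbed uniformly by the $\max$ and projection operators, and I would verify by direct substitution into the FOC that no spurious boundary maximizer is introduced, which closes the proof.
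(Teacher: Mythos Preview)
Your proposal is correct and follows essentially the same route as the paper: compute $G(b,\theta)=\wb\theta$, solve the interior FOC to get $b^{\mathrm{int}}=(\wb\theta-\wT)/c$, project onto $[0,\bar b(w)]$, and read off the two cutoffs from $b^{\mathrm{int}}=0$ and $b^{\mathrm{int}}=\bar b(w)$. The paper's version is terser and omits your regularity check and degenerate-case bookkeeping, but the argument is the same.
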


\begin{proposition}[Knife-edge for complete discipline]\label{prop:knife}
Let $\Theta=[0,\bar\theta]$. In the linear case, $b^{*}\equiv 0$ for all $\theta\in[0,\bar\theta]$ if and only if either $\bar b(w)=0$ or $\wT(w)\ge \wb\,\bar\theta$. If $\bar b(w)>0$ (equivalently, $\tau\le w_B$), then $b^{*}\equiv 0$ iff $\wT(w)\ge \wb\,\bar\theta$.
\end{proposition}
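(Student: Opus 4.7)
The plan is to read everything off the closed-form TLC policy of \cref{prop:TLC}, namely $b^{*}(\theta;w)=\proj{(\wb\theta-\wT(w))/c}{0}{\bar b(w)}$ for $\theta\ge T$ and $b^{*}(\theta;w)=0$ for $\theta<T$. Both directions then reduce to pointwise sign checks on $[0,\bar\theta]$, and the parenthetical equivalence $\bar b(w)>0\iff \tau\le w_B$ follows directly from the microfoundation $\bar b(w)=T\cdot H^{-1}(1-\tau/w_B)$ together with the convention $\bar b(w)=0$ whenever $\tau>w_B$.

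For sufficiency I would split on the disjunction. If $\bar b(w)=0$, the projection interval is the singleton $\{0\}$, so $b^{*}(\theta;w)=0$ for every $\theta\in[0,\bar\theta]$ irrespective of the sign of $\wb\theta-\wT(w)$. If instead $\wT(w)\ge \wb\bar\theta$, then for each $\theta\in[0,\bar\theta]$ the unconstrained candidate satisfies $(\wb\theta-\wT(w))/c\le(\wb\bar\theta-\wT(w))/c\le 0$, so the projection onto $[0,\bar b(w)]$ lands at $0$; combined with the admissibility rule killing $\theta<T$, this yields $b^{*}\equiv 0$.

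For necessity I would argue the contrapositive: suppose both $\bar b(w)>0$ and $\wT(w)<\wb\bar\theta$. Then $\wT(w)/\wb<\bar\theta$, and since (A3) imposes $T\le\bar\theta$, we have $\underline\theta(w)=\max\{T,\wT(w)/\wb\}\le\bar\theta$, so $\theta=\bar\theta$ is admissible. Evaluating the closed form at $\bar\theta$ gives $b^{*}(\bar\theta;w)=\min\{(\wb\bar\theta-\wT(w))/c,\bar b(w)\}>0$, contradicting $b^{*}\equiv 0$. The second sentence of the proposition is then a direct restatement: once $\bar b(w)>0$ is separated out, the first disjunct is ruled out and only the cost-driven condition $\wT(w)\ge \wb\bar\theta$ remains. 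I do not anticipate any substantive obstacle---\cref{prop:TLC} does the heavy lifting---but I would be mindful of the boundary configuration $T=\bar\theta$, where necessity is witnessed at the single admissible point $\theta=\bar\theta$ rather than on a nondegenerate interval.
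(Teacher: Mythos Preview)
Your proposal is correct and follows essentially the same route as the paper: both read the result directly off the closed-form projection in \cref{prop:TLC}, establishing sufficiency by showing the unconstrained candidate is nonpositive (or the cap degenerate) and necessity by exhibiting a positive bailout at $\theta=\bar\theta$ under the contrapositive hypotheses. Your treatment is slightly more careful than the paper's in explicitly checking admissibility ($\bar\theta\ge T$) and flagging the boundary case $T=\bar\theta$, and in justifying the parenthetical equivalence $\bar b(w)>0\iff\tau\le w_B$ from the microfoundation, but these are refinements rather than a different approach.
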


\begin{proposition}[Comparative statics under WRD]\label{prop:CS}
If \cref{ass:bundle} holds and a WRD shift increases $\wT(w)$ while (weakly) decreasing $\bar b(w)$, then $\underline\theta(w)$ (weakly) rises. For the upper cutoff $\overline\theta(w)=(\wT(w)+c\,\bar b(w))/\wb$, the net change is
\[
\Delta \overline\theta=\frac{\Delta \wT(w)+c\,\Delta \bar b(w)}{\wb},
\]
so it is negative if and only if $\Delta \wT(w)+c\,\Delta \bar b(w)\le 0$. In the linear interior region,
\[
\frac{\partial b^{*}}{\partial \theta}=\frac{\wb}{c},\quad
\frac{\partial b^{*}}{\partial \wT}=-\frac{1}{c},\quad
\frac{\partial \overline\theta}{\partial \wT}=\frac{1}{\wb},\quad
\frac{\partial \overline\theta}{\partial \bar b}=\frac{c}{\wb}.
\]
\end{proposition}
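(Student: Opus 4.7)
The plan is to read off every claim directly from the closed-form TLC representation in \cref{prop:TLC}, so the proof reduces to elementary monotonicity and differentiation rather than a fresh optimization. The three parts of the statement correspond to the three regions of the TLC rule (the lower kink, the upper kink, and the interior), and each can be handled in turn.

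First I would handle the lower cutoff. Since $\underline\theta(w)=\max\{T,\wT(w)/\wb\}$ is a pointwise maximum of a constant and a ratio strictly increasing in $\wT$ (with $\wb>0$ fixed), any shift that weakly raises $\wT$ weakly raises $\wT/\wb$, hence weakly raises the maximum. Notably $\bar b(w)$ does not enter $\underline\theta$, so the co-monotone decrease of $\bar b$ posited in \cref{ass:bundle} is irrelevant here (it matters only in ruling out the degenerate cases cleaned up by \cref{prop:knife}). The only subtlety is the kink at $\wT/\wb=T$: the rise is strict above this point and merely weak below it, which justifies the "(weakly) rises" qualifier.

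Next I would address the upper cutoff. Starting from $\overline\theta(w)=(\wT(w)+c\,\bar b(w))/\wb$, total differentiation in $w$ yields $\Delta\overline\theta=(\Delta\wT+c\,\Delta\bar b)/\wb$ directly, and since $\wb>0$ the sign of $\Delta\overline\theta$ agrees with the sign of $\Delta\wT+c\,\Delta\bar b$. Under \cref{ass:bundle} the two contributions point in opposite directions, so the net sign is not pinned down by monotone comparative statics alone; it is determined exactly by the stated inequality $\Delta\wT+c\,\Delta\bar b\le 0$. This is the key conceptual step: unlike $\underline\theta$, which moves monotonically with any tightening of the political cost, $\overline\theta$ may move either way, and the magnitude comparison $|\Delta\wT|$ vs.\ $c|\Delta\bar b|$ is decisive.

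Finally I would compute the four interior partials. On $\underline\theta(w)<\theta<\overline\theta(w)$ the rule collapses to $b^{*}=(\wb\,\theta-\wT(w))/c$, from which $\partial b^{*}/\partial\theta=\wb/c$ and $\partial b^{*}/\partial\wT=-1/c$ are immediate; likewise $\overline\theta=(\wT+c\,\bar b)/\wb$ yields $\partial\overline\theta/\partial\wT=1/\wb$ and $\partial\overline\theta/\partial\bar b=c/\wb$. There is no real obstacle in this proposition: the heavy lifting was already done in \cref{lem:wellposed} and \cref{prop:TLC}, and the proof here amounts to careful bookkeeping of signs plus one case split at the $\max$ defining $\underline\theta$.
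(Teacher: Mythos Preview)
Your proposal is correct and follows essentially the same route as the paper: invoke the closed forms from \cref{prop:TLC}, use monotonicity of the $\max$ to handle $\underline\theta$, and read off $\Delta\overline\theta$ and the interior partials by direct differentiation of the explicit expressions. The paper's proof is more terse but structurally identical; your additional remarks on the kink at $\wT/\wb=T$ and the sign ambiguity in $\Delta\overline\theta$ are sound elaborations rather than a different argument.
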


\section{Interpretation and Design}

\subsection{Why WRD Helps this Paper}\label{sec:whywrd}
\paragraph{Microfoundations for the two levers.}
WRD turns the political cost of funds and the consent cap into institutional knobs. With a quota $\tau$, beneficiary weight $w_B$, and heterogeneity $x\sim H$, feasibility yields
\[
b\le \theta\,H^{-1}\!\Big(1-\frac{\tau}{w_B}\Big),\qquad 
\bar b(w)=T\cdot H^{-1}\!\Big(1-\frac{\tau}{w_B}\Big),
\]
while $\wT(w)=\lambda_0+\lambda_1\phi(w)$ captures the political shadow cost. Hence $\partial\bar b/\partial\tau<0$, $\partial\bar b/\partial w_B>0$, and the co-movement in \cref{ass:bundle} follows from raising the salience/weight on net contributors. The \emph{uniform} cap $\bar b(w)$ is an institutional proposal constraint (agenda/approval rule), not a mere technical bound; it guarantees transparency and yields clean comparative statics.

\paragraph{Identifiable signatures and audits.}
The WRD-induced policy is TLC with two cutoffs $(\underline\theta,\overline\theta)$ and interior slope $\omega_b/c$. Shifts that increase $\wT$ (or tighten $\bar b$) move the cutoffs as in \cref{prop:CS}: $\underline\theta$ rises; $\overline\theta$ moves according to $\Delta\wT+c\,\Delta\bar b$ (negative iff this sum is nonpositive). These are ex-post auditable via a constrained two-knot spline and give a compact ``rule card'': publish $(\underline\theta,\overline\theta,\omega_b/c)$ and record weighted consent.

\paragraph{A design problem (choosing $T,\wT,\bar b$).}
In the linear case and away from the cap near $T$,
\[
\frac{d}{dT}\E[b^*(\theta;w)]
= -\frac{\omega_b T - \omega_T}{c}\,f(T),
\]
which applies when $T\ge \wT/\wb$ (so $T$ pins down the lower cutoff) and in a neighborhood where the cap segment does not bind; otherwise piecewise adjustments apply. Thus a parsimonious calibration recipe is: pick $T$ from technical spillover criteria; target interior responsiveness $\wb/c$ from admin capacity; back out $\wT$ from an acceptable activation rate via $\underline\theta=\max\{T,\wT/\wb\}$; and set $\bar b$ to place $\overline\theta=(\wT+c\bar b)/\wb$ at the tolerable risk cutoff.

\paragraph{Bridge to screening.}
If the implementable payout obeys $p(\hat G,\hat\theta)=\min\{\beta(\hat G),\,b(\hat\theta)\}$, then the WRD consent cap $\bar b(w)$ and political weight $\wT(w)$ imply a TLC discretionary component $b(\hat\theta)$ (\cref{prop:TLC}). Hence both approaches share the same observable three-part signature, but the levers differ: screening parameters vs.\ representation weights/quotas.

\paragraph{Equity without losing tractability.}
Introduce equity floors only on the interior region (i.e.\ for $\theta\ge \underline{\theta}$).
Let
\[
b_{\min}:\ [\underline{\theta},\,\overline{\theta}] \to [0,\,\bar b(w)]
\quad\text{be weakly increasing,}
\]
and implement
\[
b^*(\theta;w)=\Bigl[\max\bigl\{\,b_{\min}(\theta),\,b^{\mathrm{int}}(\theta;w)\,\bigr\}\Bigr]_{0}^{\ \bar b(w)}.
\]
Two simple sufficient conditions preserve the TLC geometry (no extra kink beyond $\underline{\theta}$ and $\overline{\theta}$):
\begin{description}
  \item[SC--Parallel.] $b_{\min}(\theta)=a+s\,\theta$ with $s=\omega_b/c$ and $a$ such that
  $0\le a+s\,\theta\le \bar b(w)$ for all $\theta\in[\underline{\theta},\,\overline{\theta}]$.
  Then
  \[
  \max\{\,a+s\,\theta,\,s\,\theta-\omega_T/c\,\}
  = s\,\theta-\tilde{\omega}_T/c,\qquad \tilde{\omega}_T\equiv \omega_T-c\,a,
  \]
  so the policy remains TLC with the \emph{same} interior slope and a shifted lower cutoff.
  \item[SC--Dominated.] $b_{\min}(\theta)\le b^{\mathrm{int}}(\theta;w)$ for all $\theta\in[\underline{\theta},\,\overline{\theta}]$,
  so the floor never binds and the original TLC schedule is unchanged.
\end{description}
If $b_{\min}$ is monotone and piecewise linear with slope at most $\omega_b/c$, then at most one additional interior crossing can occur; designers who wish to \emph{guarantee} the two--cutoff signature should adopt SC--Parallel or verify SC--Dominated ex ante.
Group-level entries can be handled by group-specific parameters $(\omega_T)_{g}$ and $(\bar b)_{g}$ (and, if desired, floors $b_{\min,g}$) subject to the same monotonicity and cap feasibility; applying the same conditions group-wise preserves the two--cutoff geometry within each group.

\paragraph{A welfare lens (political vs.\ social shadow cost).}
If the social shadow value of a public dollar is $\lambda^{\text{soc}}$ while the political shadow cost is $\wT$, the wedge $\wedge\equiv \wT-\lambda^{\text{soc}}$ shifts the lower cutoff by $\wedge/\wb$ and the upper cutoff by $\wedge/\wb$ (holding $\bar b$ fixed). Hence misalignment primarily rotates the extensive margin; the interior slope $\wb/c$ is unaffected. This clarifies which part of the TLC geometry is \emph{normative} (slope) vs.\ \emph{political} (cutoffs).

\subsection{Economic interpretation of the two levers}
The closed-form policy in Eq.~(\ref{eq:policyclosed}) makes the roles of the WRD levers transparent. In the interior (linear) region,
\[
 b^{*}(\theta;w)=\frac{\omega_b\,\theta-\omega_T(w)}{c},\qquad 
 \frac{\partial b^{*}}{\partial \theta}=\frac{\omega_b}{c}, \quad 
 \frac{\partial b^{*}}{\partial \omega_T}= -\frac{1}{c}, \quad 
 \frac{\partial b^{*}}{\partial \bar b}=0\ \text{(interior)}.
\]
Thus $\omega_T(w)$ disciplines bailouts at the \emph{extensive margin} (shifts the activation line), while $\bar b(w)$ operates at the \emph{intensive margin} (a hard cap). The two cutoffs are
\[
 \underline\theta(w)=\max\{T,\omega_T/\omega_b\}, \qquad 
 \overline\theta(w)=\frac{\omega_T+c\,\bar b}{\omega_b},
\]
with comparative statics
\[
 \frac{\partial \underline\theta}{\partial \omega_T}=\frac{1}{\omega_b}\ \text{(when $\omega_T/\omega_b>T$)}, \qquad
 \frac{\partial \overline\theta}{\partial \omega_T}=\frac{1}{\omega_b}, \qquad
 \frac{\partial \overline\theta}{\partial \bar b}=\frac{c}{\omega_b}.
\]
These signatures are \emph{falsifiable}: a correctly implemented rule yields a piecewise-linear schedule with slope $\omega_b/c$ and exactly two breakpoints that move predictably with political weights and the consent cap.

\subsection{Design levers and what they mean in practice}
\begin{description}
  \item[(D1) Weighted consent cap $\bar b(w)$.] Legislative approval under WRD induces a maximal implementable bailout. Policy knobs: supermajority quota $\tau$, double-majority (population and jurisdictions), agenda right limiting amendment size; measurement knobs include recording $w_B$ and publishing $H^{-1}(\cdot)$-based implied caps.
  \item[(D2) Political/fiscal weight $\omega_T(w)$.] The shadow value of public funds under representative choice. Knobs: transparency on financing sources, pay-as-you-go constraints, automatic offsets, earmarking rules; disclosure of offset sources raises perceived $\wT$ and shifts $\underline\theta$ rightward.
  \item[(D3) Externality threshold $T$.] Admissibility: bailouts considered only when spillovers exceed $T$. Knobs: service-continuity indices, interdependence metrics, eligible-shock lists; $T$ pins down the lower cutoff whenever $T\ge \wT/\wb$.
  \item[(D4) Information policy.] Publish $(\underline\theta,\overline\theta,\omega_b/c)$ and minutes on weighted consent and any overrides; report a compliance table separating interior cases ($b^*<\bar b$) from cap-binding cases ($b^*=\bar b$).
  \item[(D5) Exception clause.] A narrowly tailored, supermajority override can coexist with the cap; operationally it is a one-shot increase in $\bar b$ with explicit minutes. Overrides should not change the interior slope.
\end{description}

\subsection{Institutional mappings: illustrative handles}
The model is jurisdiction-agnostic, but the levers map to familiar instruments. 
\begin{table}[h]
\centering
\begin{tabular}{p{0.23\linewidth} p{0.40\linewidth} p{0.30\linewidth}}
\hline
Lever & Generic policy instrument & Illustrative handle \\\\
\hline
$\omega_T(w)$ & Pay-as-you-go or offset rules; taxpayer-weighted endorsement; treasury certification of fiscal room & Budget rules or finance-ministry attestation that tighten perceived cost per dollar \\\\
$\bar b(w)$ & Supermajority quota $\tau$; double-majority (population and jurisdictions); agenda right limiting amendment size & Council voting weights/quotas that imply a numerical cap \\\\
$T$ & Eligibility list; quantitative spillover index; emergency criteria & Threshold based on service-continuity or inter-municipal spillover metrics \\\\
\hline
\end{tabular}
\caption{Design levers and institutional handles (illustrative).}
\end{table}

\subsection{Pilot and evaluation protocol}
\begin{enumerate}
  \item \textbf{Ex-ante rule card.} Publish $(T,\tau)$ and $(\underline\theta,\overline\theta,\omega_b/c)$; announce a uniform cap $\bar b(w)$ and the override rule.
  \item \textbf{Cap schedule.} Fix $\bar b$ as a piecewise-constant function of measurable exposure (e.g., debt or tax-base quantiles) consistent with WRD weights, updating only on a published calendar.
  \item \textbf{Deliberation.} Representatives record weighted consent on a standard form; minutes justify any override and state the implied $\bar b$.
  \item \textbf{Audit.} After each case, verify the two-cutoff signature and whether observed $b$ equals the interior rule or the cap. Classify episodes as: (i) zero region; (ii) interior; (iii) cap-binding; (iv) override.
\end{enumerate}

\subsection{Testable signatures (for auditing the rule)}
Let $\hat b(\theta)$ be realized bailouts. Necessary implications of Eq.~(\ref{eq:policyclosed}):
\begin{enumerate}
  \item Piecewise linearity with at most two cutoffs; interior slope $\approx \omega_b/c$.
  \item A lower zero region ending at $\underline\theta$ and a flat region at the cap above $\overline\theta$.
  \item Raising $\omega_T$ moves both cutoffs rightward by $1/\omega_b$ per unit; tightening the cap (reducing $\bar b$) lowers $\overline\theta$ by $c/\omega_b$ per unit.
  \item Under an announced exception, only the cap segment may jump; the interior slope remains $\omega_b/c$.
  \item Under joint WRD shifts, $\Delta\overline\theta$ has the sign of $\Delta\wT+c\,\Delta\bar b$; interior slope is invariant.
\end{enumerate}

\subsection{Estimation and Audit: A Constrained Two-Knot Spline}
\noindent\emph{Caveat.} In the linear-benefit specification, the interior slope identifies $\omega_b/c$. Under general concave benefits, the spline recovers the two cutoffs while the interior segment is a monotone approximation (not structurally interpretable as $\omega_b/c$).

Let $\hat b$ be realized bailouts and $\theta$ the observed shock proxy. Estimate the TLC via the continuous two-knot hinge regression with a zero lower segment
\[
\hat b_i = s\cdot(\theta_i-\theta_1)_+ \; -\; s\cdot(\theta_i-\theta_2)_+ \; +\; \varepsilon_i,
\qquad
\theta_2\ge \theta_1\ge T,\ \ s\ge 0,
\]
then clip negative fitted values at zero to respect feasibility. Identification in the linear model: $s=\omega_b/c$, $\theta_1=\underline\theta(w)$, $\theta_2=\overline\theta(w)$. Useful robustness checks:
\begin{itemize}
  \item \emph{Partial compliance.} Add a cap-dummy and an interaction to detect systematic overrides; the interior slope should be unchanged.
  \item \emph{Heteroskedasticity.} Use quantile regression with the same constraints to limit leverage from large shocks.
  \item \emph{Shift attribution.} Across institutional changes, test whether $\Delta \hat\theta_{1,2}$ align with $\Delta\wT/\wb$ and $c\,\Delta\bar b/\wb$ separately.
\end{itemize}

\subsection{Calibration notes}
Marginal changes in $T$ affect activation via the hazard $\haz(T)=f(T)/\barF(T)$. A practical recipe: (i) choose $T$ from service-continuity standards; (ii) target interior responsiveness $\omega_b/c$ from marginal social benefit and admin capacity; (iii) back out $\omega_T$ from an acceptable activation rate via $\underline\theta=\max\{T,\omega_T/\omega_b\}$; (iv) set $\bar b$ so that $\overline\theta$ matches fiscal risk tolerance; (v) publish the resulting $(\underline\theta,\overline\theta,\omega_b/c)$ for audit.

\subsection{Equity and safeguards}
Equity can be layered without breaking tractability: impose a minimum essential-services floor $b\ge b_{\min}(\theta)$ within the interior region, or encode group-sensitive concerns inside $\omega_T(\cdot)$ while preserving the monotonicity that underpins \cref{ass:bundle}. Both preserve the TLC structure and make protection of vulnerable communities explicit. Distributional ceilings (e.g., percentile-based caps) are equivalent to choosing $\bar b$ by exposure quantiles.

\subsection{Bridge to the Screening-Based Mechanism}
Consider the implementable payout convention \(p(\hat G,\hat\theta)=\min\{\beta(\hat G),\, b(\hat\theta)\}\) from the screening model. If WRD induces the consent cap \(\bar b(w)\) and political weight \(\wT(w)\) in this paper, then the discretionary transfer component \(b(\hat\theta)\) follows the TLC rule in \cref{prop:TLC}, and the overall implementable transfer inherits the same two-cutoff signature. Hence the screening-based mechanism and WRD-based discipline are observationally equivalent along \((\underline\theta,\overline\theta,\text{slope})\), while offering complementary institutional interpretations. The observational equivalence highlights an empirical identification caveat: without institutional variation in quotas/weights, screening and representation channels may be difficult to disentangle.

\section{Extensions}\label{sec:extensions}

\subsection*{Heterogeneous municipalities and a treasury constraint}
For $i=1,\dots,N$ with weights $w^{(i)}$, caps $\bar b(w^{(i)})$, and a provincial treasury limit $\sum_i b_i\le B$, the optimal allocation is piecewise linear: each $b_i^{*}$ follows its TLC schedule until either the cap or the common treasury shadow price binds; politically weaker municipalities hit the cap earlier. If the common shadow price $\lambda_B$ binds, interior slopes are unchanged but intercepts shift as if $\wT$ were replaced by $\wT+\lambda_B$; the cross-municipal ordering of cutoffs follows $(\wT^{(i)},\bar b^{(i)})$.

\subsection*{Endogenous representative choice}
If citizens choose representatives given observables (tax base, externality salience) before shocks, a monotone mapping $\Phi:\text{covariates}\to w$ obtains. As long as $\wT(\Phi(\cdot))$ is increasing in net-contributor salience and $\bar b(\Phi(\cdot))$ weakly falls, all TLC comparative statics carry through. Endogeneity generates additional testable implications: cross-sectional variation in $w_T$ (or $w_B$) predicts parallel shifts in both cutoffs by $1/\wb$ per unit of $\wT$, with $\overline\theta$ additionally moving by $(c/\wb)\,\Delta\bar b$ when caps change. 

\section{Conclusion}
We showed how WRD induces two institutional levers---a political/fiscal shadow cost and a consent cap---that deliver a simple, auditable TLC bailout rule and a knife-edge for complete discipline. The framework isolates the margins that matter for credibility while remaining portable across jurisdictions and compatible with richer features (effort, heterogeneity). It suggests lean policy pilots: publish the two cutoffs and the interior slope, record weighted consent, and audit for the two-cutoff signature.

\appendix
\section*{Appendix A: Proofs}
\addcontentsline{toc}{section}{Appendix A: Proofs}

\begin{proof}[Proof of Lemma~\ref{lem:cap}]
Monotonicity of each $y_r(\cdot,\theta)$ in $b$ implies $Y(\cdot,\theta;w)$ is weakly decreasing. Under \cref{ass:monorc}, $Y(\cdot,\theta;w)$ is right-continuous, so the upper contour set $\{b:Y(b,\theta;w)\ge \tau\}$ is an initial segment $[0,\bar b]$ or empty. Right--continuity ensures that if $b<\bar b(\theta;w)$, then $Y(b,\theta;w)\ge\tau$; conversely, if $b>\bar b(\theta;w)$, then $Y(b,\theta;w)<\tau$. The convention $\sup\emptyset=0$ covers the empty case. Thus $b$ passes iff $b\le \bar b(\theta;w)$.
\end{proof}

\begin{proof}[Proof of Lemma~\ref{lem:wellposed}]
Under \cref{ass:reg}, for each $(\theta,w)$ with $\theta\ge T$ the problem has a unique maximizer because $U(\cdot,\theta;w)$ is strictly concave and the feasible set $[0,\bar b(w)]$ is compact (or $U_b\to -\infty$ as $b\to\infty$ when $\bar b(w)=\infty$). For $\theta<T$, the admissibility rule imposes $b^*(\theta;w)=0$.
\end{proof}

\begin{proof}[Proof of Proposition~\ref{prop:equiv}]
By Lemma~\ref{lem:cap}, the implementable set given $(\theta,w)$ without the uniform cap is $[0,\bar b(\theta;w)]$. We take the uniform cap $[0,\bar b(w)]\subseteq[0,\bar b(\theta;w)]$ as an institutional constraint on proposals. Representatives’ votes are automata given $(b,\theta)$ and the rule; the province alone chooses $b$. Strict concavity (Lemma~\ref{lem:wellposed}) yields a unique maximizer over $[0,\bar b(w)]$ when $\theta\ge T$ and $b=0$ when $\theta<T$. Backward induction gives the unique subgame--perfect outcome, coinciding with the planner’s solution.
\end{proof}

\begin{proof}[Proof of Proposition~\ref{prop:TMC}]
Consider the KKT system for $\max_{0\le b\le \bar b(w)} U(b,\theta;w)$. Let multipliers be $\mu\ge0$ for $b\ge0$ and $\nu\ge0$ for $b\le \bar b(w)$. Stationarity:
\[
G(b,\theta)-\wT(w)-cb-\mu+\nu=0,
\]
with complementarity $\mu\,b=0$, $\nu\,(b-\bar b(w))=0$.  
(i) If $G(0,\theta)<\wT(w)$, then $b^*=0$.  
(ii) If $G(\bar b(w),\theta)>\wT(w)+c\,\bar b(w)$, then $b^*=\bar b(w)$.  
(iii) Otherwise, there is a unique $b^{\mathrm{int}}\in(0,\bar b(w))$ solving $G(b,\theta)=\wT(w)+cb$ by continuity and strict monotonicity in $b$. Differentiating in $\theta$:
\[
\Big(c-G_b(b^{\mathrm{int}},\theta)\Big)\frac{\partial b^{\mathrm{int}}}{\partial \theta}
=G_\theta(b^{\mathrm{int}},\theta)\ \ge 0,
\]
where $G_b=B_{bb}\le 0$ and $c>0$, hence $\partial b^{\mathrm{int}}/\partial \theta\ge0$. Boundaries $\underline\theta,\overline\theta$ are the transition points across (i)--(iii).
\end{proof}

\begin{proof}[Proof of Proposition~\ref{prop:TLC}]
From $G(b,\theta)=\wb\theta$, the interior FOC gives $b^{\mathrm{int}}=(\wb\theta-\wT)/c$. Projecting onto $[0,\bar b(w)]$ yields the stated piecewise rule, and the two cutoffs solve $b^{\mathrm{int}}=0$ and $b^{\mathrm{int}}=\bar b(w)$, respectively.
\end{proof}

\begin{proof}[Proof of Proposition~\ref{prop:knife}]
($\Rightarrow$) If there exists $\theta$ with $\wb\theta-\wT(w)>0$ and $\bar b(w)>0$, then the interior candidate is positive and after projection $b^*>0$. Thus $b^*\equiv 0$ implies $\wb\theta-\wT(w)\le 0$ for all $\theta\le\bar\theta$, i.e.\ $\wT(w)\ge \wb\,\bar\theta$. If $\bar b(w)=0$, trivially $b^*\equiv0$.  
($\Leftarrow$) If $\wT(w)\ge \wb\,\bar\theta$, then for all $\theta\in[0,\bar\theta]$ we have $\wb\theta-\wT(w)\le 0$, hence $b^{\mathrm{int}}\le 0$ and the projection gives $b^*=0$. If $\bar b(w)=0$, also $b^*\equiv0$.
\end{proof}

\begin{proof}[Proof of Proposition~\ref{prop:CS}]
By Proposition~\ref{prop:TLC}, $\underline\theta(w)=\max\{T,\wT(w)/\wb\}$, hence it weakly rises when $\wT$ rises. The upper cutoff is $\overline\theta(w)=(\wT(w)+c\,\bar b(w))/\wb$, so $\Delta \overline\theta = (\Delta\wT(w)+c\,\Delta\bar b(w))/\wb$. Interior derivatives follow directly from $b^*(\theta;w)=(\wb\theta-\wT(w))/c$.
\end{proof}

\bibliographystyle{apalike}
\bibliography{main}

\end{document}